\newtheorem{definition}{Definition}
\newtheorem{theorem}{Theorem}
\newtheorem{lemma}{Lemma}
\newtheorem{corollary}{Corollary}
\def\BibTeX{{\rm B\kern-.05em{\sc i\kern-.025em b}\kern-.08em
    T\kern-.1667em\lower.7ex\hbox{E}\kern-.125emX}}
\begin{document}

\title{Generalized Score Matching: Bridging $f$-Divergence and Statistical Estimation Under Correlated Noise\\

}

\author{\IEEEauthorblockN{Yirong Shen}
\IEEEauthorblockA{\textit{Electrical \& Electronic Engineering} \\
\textit{Imperial College London}\\
London SW7 2AZ, UK \\
yirong.shen22@imperial.ac.uk}
\and
\IEEEauthorblockN{Lu Gan}
\IEEEauthorblockA{\textit{Electronic \& Electrical Engineering} \\
\textit{Brunel University of London}\\
Uxbridge UB8 3PH, UK \\
lu.gan@brunel.ac.uk}
\and
\IEEEauthorblockN{Cong Ling}
\IEEEauthorblockA{\textit{Electrical \& Electronic Engineering} \\
\textit{Imperial College London}\\
London SW7 2AZ, UK \\
c.ling@imperial.ac.uk}

}

\maketitle

\begin{abstract}
Relative Fisher information, also known as score matching, is a recently introduced learning method for parameter estimation. Fundamental relations between relative entropy and score matching have been established in the literature for scalar and isotropic Gaussian channels. This paper demonstrates that such relations hold for a much larger class of observation models. We introduce the vector channel where the perturbation is non-isotropic Gaussian noise. For such channels, we derive new representations that connect the $f$-divergence between two distributions to the estimation loss induced by mismatch at the decoder. This approach not only unifies but also greatly extends existing results from both the isotropic Gaussian and classical relative entropy frameworks. Building on this generalization, we extend De Bruijn's identity to mismatched non-isotropic Gaussian models and demonstrate that the connections to generative models naturally follow as a consequence application of this new result.
\end{abstract}

\section{Introduction}
Parameter estimation plays a central role in machine learning and statistics. In high-dimensional models, \textit{maximum likelihood estimation} (MLE) is often computationally prohibitive due to the complexity of partition functions. As a practical alternative, \textit{relative Fisher information} \cite{verdu2010mismatched}, also known as \textit{score matching} \cite{hyvarinen2005estimation} emerges as a powerful technique, particularly effective for learning unnormalized statistical models. 

%However, the study of connection between score matching in high-dimensional settings, especially under correlated noise perturbation, remain underexplored.

Recent advances have proposed practical solutions to address computational challenges in score matching. Methods such as approximate backpropagation \cite{kingma2010regularized} and curvature propagation \cite{10.5555/3042573.3042698} aim to reduce the cost associated with computing the trace of the Hessian, while alternative formulations interpret score matching as a denoising problem \cite{6795935}. Moreover, score-based generative models \cite{DBLP:conf/iclr/0011SKKEP21,song2019generative} have popularized the use of score matching as a training objective, demonstrating its robustness and effectiveness. These methods highlight that score matching is naturally resilient to noisy data, as theoretically connected to MLE under isotropic Gaussian noise \cite{10.5555/1795114.1795156}.

%Recent work has introduced practical solutions to score matching, which has gained traction as powerful tools for learning models. For instance, approximate backpropagation \cite{kingma2010regularized} and curvature propagation \cite{10.5555/3042573.3042698} aim to reduce the cost of computing the Hessian trace, while reframing score matching as a denoising problem \cite{6795935}. Also, recent developments of score-based generative models \cite{DBLP:conf/iclr/0011SKKEP21,song2019generative} treat score matching as optimization objectives make it a powerful tool in machine learning. These methods underscore the robustness of score matching to noisy data, as highlighted in \cite{10.5555/1795114.1795156}, which established a theoretical link to maximum likelihood estimation under isotropic Gaussian noise. 

However, the assumption of isotropic noise, a cornerstone of traditional score matching, severely restricts its applicability to more complex noise environments often encountered in modern generative models. Recent developments have drawn increasing attention to the critical role of correlated and structured noise. Applications such as image editing \cite{yu2024constructing}, non-isotropic noise modeling \cite{huang2024blue}, and blurring diffusion models \cite{hoogeboom2023blurring} exemplify the growing demand to extend score matching techniques beyond the isotropic setting. This highlights the need for a generalized framework capable of bridging score matching with broader and more realistic noise structures.

Furthermore, score matching’s intuitive connection to log-likelihood (or relative entropy) reinforces its role as a fundamental objective for probabilistic generative models \cite{DBLP:journals/corr/TheisOB15}. Nevertheless, alternative objectives explored in implicit generative modeling \cite{farnia2016minimax} suggest that relaxing the strict reliance on log-likelihood could unlock improved performance and new possibilities in generative modeling.

\textbf{Our contributions}: Motivated by these challenges, we propose \textit{f score matching} a novel variant that generalizes traditional score matching along two important directions:
\begin{itemize}
    \item It generalizes the learning objective to a broad family of $f$-divergence, offering a flexible and principled alternative to relative entropy.
    \item It extends score matching to correlated noise structures, thereby overcoming the limitations imposed by isotropic noise assumptions.
\end{itemize}
In addition to its theoretical formulation, $f$-score matching retains favorable properties for numerical optimization. In algorithmic design, incorporating the Hessian matrix alongside gradients (as in Newton’s method \cite{5165186}), can significantly accelerate convergence. Our framework naturally accommodates this by aligning excess score functions along noise perturbations.

Furthermore, the generalization provided by \textit{f score matching} enables robust and tractable analysis in models such as Gaussian mixture learning \cite{jia2023entropic} and non-isotropic noise perturbation diffusion \cite{yu2024constructing}. It unifies various existing learning objectives under a common theoretical foundation and provides a scalable approach for handling complex noise in high-dimensional generative tasks. The resulting framework finds broad applications, including adversarial training \cite{goodfellow2014generative, farnia2018convex, nowozin2016f}, optimal covariance matching, robust optimization, and density estimation for generative models with correlated noise.

Finally, building on \cite{1564431}, we extend De Bruijn’s identity to mismatched vector Gaussian channels and analyze the interplay between Fisher information, relative entropy, and mismatched estimation as a concrete instance within the $f$-divergence family. We also discuss potential applications of these results, underscoring the broader impact and versatility of the developed theoretical framework.

The remainder of the paper is organized as follows. Section \ref{problem setting} describes the problem setting and key terminology. Section \ref{main results} presents our main results on generalized score matching, including cases with correlated Gaussian noise and classical relative entropy. Section \ref{implications} discusses the implications for learning models and concludes the paper. Proofs of technical lemmas are deferred to Section \ref{appendix}.

%The remainder of this paper is organized as follows. Section \ref{problem setting} introduces the problem setting and relevant terminology. In Section \ref{main results}, we present our main results on generalized score matching, including an analysis of scenarios involving correlated Gaussian noise and the classical relative entropy for a specific case. Section \ref{implications} discusses the implications of our findings for learning models and concludes with a summary. Detailed proofs of the lemmas are provided in Section \ref{appendix}.

\section{Background}\label{problem setting}

In this section, we establish the theoretical foundation for the proposed \textit{f score matching} framework by revisiting key concepts in information theory and statistical estimation. These include relative Fisher information, which quantifies the gradient mismatch between distributions, $f$-divergence, a versatile measure for distributional differences, and vector Gaussian channels, which model the impact of noise in generative tasks.

\subsection{Relative Fisher Information Matrix}

A key result from \cite{5508632} is the idea of relative Fisher information, which had previously received limited attention. Besides the Fisher information matrix \cite{Fisher_1925}, another quantity that is closely related to the relative entropy \cite{cover1999elements} is the relative Fisher information matrix. Formally, the relative Fisher information matrix between two probability density functions (pdfs) \(p(\mathbf{x})\) and \(q(\mathbf{x})\)\footnote{In this paper, random objects are denoted by uppercase letters, and their realizations by lowercase letters. The expectation \(E(\cdot)\) is taken over the joint distribution of the random variables inside the brackets.} is defined as:
\begin{definition}
    ({Relative Fisher Information Matrix})
    Let \(p(\mathbf{x})\) and \(q(\mathbf{x})\) be two densities of a random vector \(\mathbf{x}\) (with finite second-order moments). The relative Fisher information matrix with respect to a translation parameter is:
    \begin{equation}\label{relative fisher information Eq}
        \mathcal{I}\bigl(p\!\parallel\!q\bigr) 
        = E_{p(\mathbf{x})}\Bigl[\|\nabla_{\mathbf{x}}\log p(\mathbf{x}) 
        \;-\;\nabla_{\mathbf{x}}\log q(\mathbf{x})\|_{\mathrm{OP}}^{2}\Bigr],
    \end{equation}
    where the notation \(\|\mathbf{u}\|^2_{\mathrm{OP}}\) denotes the outer product of a vector \(\mathbf{u} \in \mathbb{R}^d\) with itself, that is,\[\|\mathbf{u}\|^2_{\mathrm{OP}} := \mathbf{u} \mathbf{u}^\top \in \mathbb{R}^{d \times d}.\]
    %where \(\|\mathbf{u}\|^2_\text{OP} :=\mathbf{u}\mathbf{u}^{T}\) denotes the outer product of a vector $\mathbf{u} \in \mathbb{R}^d$ and its transpose. 
\end{definition}
It is related to score matching \cite{hyvarinen2005estimation} because \(\nabla_{\mathbf{x}} \log p(\mathbf{x})\) is known as the score function in statistics.
The term “score” here refers to the gradient of the log density with respect to the data vector (rather than the derivative of the log likelihood with respect to parameters). %as defined in Definition \ref{Fisher information matrix}.
This measure avoids computing the partition function by using a different metric on the density functions, which makes it especially suitable for estimating the unnormalized models, where the partition function is denoted as $Z_{\boldsymbol{\theta}}$, which is assumed to be existent but intractable.

\subsection{Learning Metric}

In statistical models, we have \(d\)-dimensional data \(\mathbf{X}\) with density \(p(\mathbf{x})\). The goal is to fit a parametric probabilistic model \(q(\mathbf{x};{\boldsymbol{\theta}})\), where \(\boldsymbol{\theta}\) is the model parameter, to match \(p(\mathbf{x})\) as closely as possible. Specifically, one selects a divergence between two pdfs, \(p(\mathbf{x})\) and \(q(\mathbf{x};{\boldsymbol{\theta}})\). Learning then involves finding the parameter \(\boldsymbol{\theta}\) from some parameter space \(\boldsymbol{\Theta}\) that minimizes this divergence:
\begin{equation}\label{distance}
    \min \ell \bigl(p(\mathbf{x}),\, q(\mathbf{x};{\boldsymbol{\theta}})\bigr),
\end{equation}
where \(\ell(\cdot,\cdot)\) is the chosen divergence. Ideally, the divergence metric should be non-negative and equal to zero only when the two densities are identical almost everywhere.

Building on this general view, among the information measures, \( f \)-divergence stands out as a general and versatile metric, as it is defined for distributions that are discrete, continuous, or neither. Moreover, other information measures, such as relative entropy (Kullback–Leibler divergence), Jensen–Shannon divergence and squared Hellinger distance, can be easily expressed in terms of \( f \)-divergence  \cite{csiszar1967information}.
\begin{definition}\label{f divergence def}
    (\(f\)-divergence) Let \(f: (0, \infty) \to \mathbb{R}\) be a strictly convex, continuous function with \(f(1) = 0\). Let \(P\) and \(Q\) be probability distributions on a measurable space \((\mathcal{X}, \mathcal{F})\). If \(P \ll Q\), the \(f\)-divergence is:
\begin{equation}\label{f-divergence eq}
    D_{f}(P\|Q) 
    = \int f\!\Bigl(\frac{dP}{dQ}\Bigr)\,dQ,
\end{equation}
where \(\tfrac{dP}{dQ}\) is the Radon–Nikodym derivative, and \(f(0) \triangleq f(0+)\).
\end{definition}

\subsection{Probabilistic Diffusion Models}\label{mismatched channel section}
%\begin{figure*}
    %\centering
    %\includegraphics[width=0.7\linewidth]{gaussian channel.png}
    %\caption{A diffusion probabilistic model (which we will call a “diffusion model” for brevity) is a parameterized Markov chain trained using variational inference to produce samples matching the data $\mathbf{X}$ after finite time.}
    %\label{fig:diffusion}
%\end{figure*}

%(cf., Fig.~\ref{fig:diffusion})
Recently, diffusion generative models \cite{ho2020denoising} are often modeled by the vector Gaussian channel \cite{1412024}. In this work, we generalize Gaussian diffusion models by allowing a non-isotropic Gaussian noise distribution, whose covariance matrix \(\mathbf{\Sigma}\) is positive semi-definite, rather than the identity matrix:
\begin{equation}\label{vector channel}
    \mathbf{Y} 
    = \sqrt{\tilde{\alpha}_t}\,\mathbf{H}\,\mathbf{X} 
    + \sqrt{1 - \tilde{\alpha}_t}\,\mathbf{N},
\end{equation}
where \(\mathbf{H} \in \mathbb{R}^{m\times d}\) is a channel matrix, $\tilde{\alpha}_t(t) \in \mathbb{R}$ is the coefficient controlling by a time parameter $t \in [0,T]$ and \(\mathbf{N} \sim \mathcal{N}(\mathbf{n};\, 0,\mathbf{\Sigma})\) is additive Gaussian noise with an \(m\times m\) covariance matrix \(\mathbf{\Sigma}\). The input \(\mathbf{X}\) follows a distribution \(P_{\mathbf{X}}\), and the output \(\mathbf{Y}\) follows \(P_{\mathbf{Y}}\), where \(\mathbf{X}\) and \(\mathbf{Y}\) are column vectors of appropriate dimensions. 

\section{Main Results}\label{main results}

\subsection{Generalized Score Matching}

There is a striking similarity between the relative Fisher information and the $f$-divergence as in Eq.~\eqref{f-divergence eq}. If we rewrite the Fisher divergence, Eq.~\eqref{relative fisher information Eq}, as a generalized form:
\begin{align}
    \mathcal{I}_{f}(P\|Q) &= E_{Q} \Bigl[f''\Bigl(\frac{dP}{dQ}\Bigr)\,\bigl\|\nabla\frac{dP}{dQ}\bigr\|_{\text{OP}}^{2}\Bigr] \\
               &= E_{P}\Bigl[\frac{dP}{dQ}\,f''\Bigl(\frac{dP}{dQ}\Bigr)\,\bigl\|\nabla\log\frac{dP}{dQ}\bigr\|_{\text{OP}}^{2}\Bigr],
\end{align}
the difference between them lies in the fact that, instead of utilizing the likelihood ratio directly, the Fisher divergence family calculates the mismatched estimation errors of the gradient of the likelihood ratio. This suggests a potential general connection between them, and consequently, between score matching and information measures. This relationship is indeed significant and is encapsulated in the following theorem.
%their difference lie in that instead of using the likelihood ratio, the Fisher divergence computes the Euclidean norm of the gradient of the likelihood ratio.  This implies that there may be a deeper relation between them, and hence between score matching and maximum likelihood. This is indeed the case and is summarized in the following theorem.

\begin{theorem}\label{f divergence}
    Consider the signal model in \eqref{vector channel}, where \(\mathbf{X}\) (with finite second-order moments) is arbitrary, and \(\mathbf{N}\) is Gaussian with covariance matrix \(\mathbf{\Sigma}\), independent of \(\mathbf{X}\). Denote by \(p(\mathbf{y})\) and \(q(\mathbf{y};{\boldsymbol{\theta}})\) the densities of \(\mathbf{Y}\) when \(\mathbf{X}\) has distribution \(p(\mathbf{x})\) or \(q(\mathbf{x};{\boldsymbol{\theta}})\), respectively. Assume \(p(\mathbf{y})\) and \(q(\mathbf{y};{\boldsymbol{\theta}})\) are smooth and decay quickly, so that their logarithms have growth at most polynomial at infinity. 
    Let \(f(\cdot)\) have a second derivative \(f''(\cdot)\), and assume \(D_{f}(p(\mathbf{y})\|q(\mathbf{y};{\boldsymbol{\theta}}))\) is finite. Then:
\begin{equation}\label{f divergence equation}
    \nabla_{\mathbf{\Sigma}} \,D_{f}\bigl(p(\mathbf{y})\,\|\;q(\mathbf{y};{\boldsymbol{\theta}})\bigr) 
    = -\frac{1}{2}\, \mathcal{I}_{f}\bigl(p(\mathbf{y})\,\|\;q(\mathbf{y};{\boldsymbol{\theta}})\bigr).
\end{equation}
\end{theorem}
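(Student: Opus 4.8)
The plan is to reduce the statement to a matrix-valued De Bruijn (heat-equation) identity followed by two integrations by parts and an algebraic cancellation. The starting point is that, because the additive perturbation in \eqref{vector channel} is Gaussian, both output densities obey a heat equation in the noise covariance. Writing $\phi_{\mathbf{\Sigma}}$ for the centered Gaussian kernel, a one-line scalar computation gives $\partial \phi_{\sigma^2}/\partial \sigma^2 = \tfrac12\,\partial^2 \phi_{\sigma^2}/\partial y^2$, whose multivariate analogue is
\[
\frac{\partial \phi_{\mathbf{\Sigma}}(\mathbf{y})}{\partial \Sigma_{ij}} = \frac{1}{2}\,\frac{\partial^2 \phi_{\mathbf{\Sigma}}(\mathbf{y})}{\partial y_i\,\partial y_j}.
\]
Since $p(\mathbf{y})$ and $q(\mathbf{y};\boldsymbol{\theta})$ are convolutions of the input laws (pushed through $\sqrt{\tilde\alpha_t}\,\mathbf{H}$) with $\phi_{\mathbf{\Sigma}}$ — treating $\mathbf{\Sigma}$ as the covariance of the total additive Gaussian component, so that the deterministic scalar $\sqrt{1-\tilde\alpha_t}$ is absorbed — differentiation passes through the convolution, giving for both densities $\partial_{\Sigma_{ij}} p = \tfrac12\,\partial_{y_i}\partial_{y_j} p$ and likewise for $q$. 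First I would establish this identity cleanly, regarding the entries $\Sigma_{ij}$ as free coordinates so that no symmetrization factors intrude.

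Next I would differentiate $D_f(p\|q)=\int q\,f(p/q)\,d\mathbf{y}$ entrywise. Setting $r=p/q$ and differentiating under the integral sign, the chain rule together with $\partial_{\Sigma_{ij}} r = \partial_{\Sigma_{ij}} p/q - r\,\partial_{\Sigma_{ij}} q/q$ regroups the mixed terms into
\[
\frac{\partial D_f}{\partial \Sigma_{ij}} = \int\Big[\big(f(r)-r f'(r)\big)\,\partial_{\Sigma_{ij}}q \;+\; f'(r)\,\partial_{\Sigma_{ij}}p\Big]\,d\mathbf{y}.
\]
Substituting the heat identity turns every covariance derivative into a spatial Hessian, so the right-hand side becomes $\tfrac12\int[(f(r)-rf'(r))\,\partial_{y_i}\partial_{y_j}q + f'(r)\,\partial_{y_i}\partial_{y_j}p]\,d\mathbf{y}$.

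The crux is then two integrations by parts, in $y_i$ and in $y_j$, moving both spatial derivatives off $p$ and $q$ onto the coefficient functions $g(r)=f(r)-rf'(r)$ and $h(r)=f'(r)$. Using $g'(r)=-r f''(r)$ and $h'(r)=f''(r)$ and expanding $\partial_{y_i}\partial_{y_j}g$ and $\partial_{y_i}\partial_{y_j}h$, the contributions in $f'''(r)\,\partial_{y_i}r\,\partial_{y_j}r$ and in $f''(r)\,\partial_{y_i}\partial_{y_j}r$ cancel once one substitutes $p=rq$, leaving exactly $-\tfrac12\int q\,f''(r)\,\partial_{y_i}r\,\partial_{y_j}r\,d\mathbf{y}$, which is the $(i,j)$ entry of $-\tfrac12\,\mathcal{I}_f(p\|q)$. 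Reassembling the entries yields \eqref{f divergence equation}.

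I expect the main obstacle to be analytic rather than algebraic: justifying the interchange of differentiation and integration and, above all, the vanishing of the boundary terms in the two integrations by parts. This is precisely where the hypotheses enter — the assumed smoothness and rapid decay of $p$ and $q$, with at most polynomially growing log-densities, keep $g(r)$, $h(r)$ and their spatial gradients slowly varying while $p$, $q$ and their derivatives decay fast enough to annihilate all contributions at infinity; finiteness of $D_f$ together with the regularity of $f$ further guarantees that $f''(r)$ and $f'''(r)$ are integrable against the Gaussian-smoothed densities. The covariance-symmetry convention and the absorbed factor $\sqrt{1-\tilde\alpha_t}$ contribute only harmless constants and are straightforward to track.
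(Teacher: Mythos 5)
Your proposal is correct and follows essentially the same route as the paper: the matrix heat equation $\partial_{\Sigma_{ij}}p=\tfrac12\,\partial_{y_i}\partial_{y_j}p$ (the paper's Lemma~\ref{heat equation}), differentiation of $D_f$ under the integral into the same two terms weighted by $f(r)-rf'(r)$ and $f'(r)$, and integration by parts with vanishing boundary terms (the paper's Lemma~\ref{integration vanish}) followed by the same algebraic cancellation after substituting $p=rq$. The only substantive difference is bookkeeping: by pushing both spatial derivatives fully onto the coefficient functions you introduce intermediate $f'''(r)$ terms, which requires $f$ to be three times differentiable --- slightly more than the theorem's stated hypothesis --- whereas the paper stops the second integration by parts at $\nabla f'(p/q)=f''(p/q)\nabla(p/q)$ after isolating the total-divergence term $\Delta\bigl(q\,f(p/q)\bigr)$, and so never needs $f'''$; your convention of absorbing $\sqrt{1-\tilde\alpha_t}$ into the covariance is, if anything, a cleaner handling of the scale factor than the paper's.
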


To prove Theorem \ref{f divergence}, we first require two lemmas, the proofs of which are provided in the Appendix. 
\begin{lemma}\label{heat equation}
    (Matrix Heat Equation)
    Let an input signal \(\mathbf{X}\) with source density \(p(\mathbf{x})\) pass through the channel in \eqref{vector channel}, producing an output \(\mathbf{Y}\) with density \(p(\mathbf{y})\). Then for every \(\mathbf{Y} \in \mathbb{R}^{m}\), we have
    \begin{equation}\label{matrix heat equation}
        \nabla_{\mathbf{\Sigma}} p(\mathbf{y}) 
        = \frac{1}{2}\,\mathbb{H}_{\mathbf{y}}\!\bigl(p(\mathbf{y})\bigr),
    \end{equation}
    where \(\nabla_{\mathbf{\Sigma}}\) denotes the gradient with respect to \(\mathbf{\Sigma}\), and \(\mathbb{H}_{\mathbf{y}} := \Delta_{\mathbf{y}} \) is the Hessian operator \footnote{The Hessian operator \(\mathbb{H}\bigl(f(\cdot)\bigr)\), sometimes denoted by \(\Delta\) or \(\nabla^2\) (with a slight abuse of notation in this paper), is the transpose of the Jacobian matrix of the gradient of the function \(f(\cdot)\).} with respect to \(\mathbf{y}\). 
\end{lemma}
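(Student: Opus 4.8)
The plan is to exploit the convolution structure of the channel and reduce the matrix identity to a statement about a single Gaussian kernel. Writing $\mathbf{Z} = \sqrt{\tilde{\alpha}_t}\,\mathbf{H}\,\mathbf{X}$, whose law does not depend on $\mathbf{\Sigma}$, the output density is the convolution $p(\mathbf{y}) = \int p_{\mathbf{Z}}(\mathbf{z})\,\phi_{\mathbf{\Sigma}}(\mathbf{y}-\mathbf{z})\,d\mathbf{z}$, where $\phi_{\mathbf{\Sigma}}$ is the density of the additive Gaussian term (with covariance proportional to $\mathbf{\Sigma}$). Since $\nabla_{\mathbf{\Sigma}}$ acts only on the kernel and $\mathbb{H}_{\mathbf{y}}$ commutes with convolution, both operators may be passed under the integral and onto $\phi_{\mathbf{\Sigma}}$. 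It therefore suffices to prove $\nabla_{\mathbf{\Sigma}}\phi_{\mathbf{\Sigma}} = \tfrac12\,\mathbb{H}_{\mathbf{y}}(\phi_{\mathbf{\Sigma}})$ for the Gaussian kernel alone, after which the full identity follows by linearity of the convolution.

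For the kernel identity I would argue in the Fourier domain, which is by far the cleanest route because it avoids differentiating $\log\det\mathbf{\Sigma}$ and $\mathbf{\Sigma}^{-1}$. The characteristic function of the Gaussian is $\widehat{\phi}_{\mathbf{\Sigma}}(\boldsymbol{\omega}) = \exp\!\bigl(-\tfrac12\,\boldsymbol{\omega}^{\top}\mathbf{\Sigma}\,\boldsymbol{\omega}\bigr)$, a quadratic form in $\mathbf{\Sigma}$ whose entrywise derivative is immediate: $\partial_{\Sigma_{ij}}\widehat{\phi}_{\mathbf{\Sigma}} = -\tfrac12\,\omega_i\omega_j\,\widehat{\phi}_{\mathbf{\Sigma}}$. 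On the other hand, spatial differentiation transforms to multiplication, so the $(i,j)$ entry of the Hessian satisfies $\widehat{\partial_{y_i}\partial_{y_j}\phi_{\mathbf{\Sigma}}} = -\omega_i\omega_j\,\widehat{\phi}_{\mathbf{\Sigma}}$. Comparing the two expressions entry by entry yields exactly the factor $\tfrac12$, and inverting the transform gives the kernel identity; combined with the first paragraph this establishes \eqref{matrix heat equation}.

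The main obstacle is not the algebra but the rigorous justification of moving the two differential operators under the integral sign; this is where the hypotheses that $p(\mathbf{y})$ is smooth and decays quickly (so that its logarithm grows at most polynomially) enter, furnishing the dominated-convergence bounds needed to interchange differentiation and integration. A second point that must be handled consistently is the matrix-calculus convention: because $\mathbf{\Sigma}$ is symmetric, the derivatives with respect to diagonal and off-diagonal entries differ by the usual factor of two, and the same convention must be applied on both sides of \eqref{matrix heat equation} so that the symmetric Hessian matches $\nabla_{\mathbf{\Sigma}}$ entrywise. Finally, the scaling $\sqrt{1-\tilde{\alpha}_t}$ makes the effective noise covariance $(1-\tilde{\alpha}_t)\mathbf{\Sigma}$, so one should read $\nabla_{\mathbf{\Sigma}}$ as the gradient with respect to this effective covariance (equivalently, the constant $1-\tilde{\alpha}_t$ is absorbed into $\mathbf{\Sigma}$); otherwise an extra factor of $(1-\tilde{\alpha}_t)$ appears, which is immaterial to the subsequent use of the lemma.
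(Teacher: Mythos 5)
Your proposal is correct, but it proves the kernel identity by a genuinely different route from the paper. The paper works directly on the Gaussian density in the spatial domain: it writes $p(\mathbf{y})$ as the mixture integral, computes $\nabla_{\mathbf{\Sigma}}$ of $|\mathbf{\Sigma}|^{-1/2}\exp\bigl(-\tfrac{1}{2\tilde{\beta}_t}\mathbf{Q}\bigr)$ via the matrix-calculus identities for $\nabla_{\mathbf{\Sigma}}|\mathbf{\Sigma}|^{-1/2}$ and $\nabla_{\mathbf{\Sigma}}\bigl(\mathbf{u}^{\top}\mathbf{\Sigma}^{-1}\mathbf{u}\bigr)$, separately computes the gradient and then the Hessian in $\mathbf{y}$, and matches the two resulting integrands term by term to read off the factor $\tfrac12$. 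Your Fourier-domain argument reaches the same conclusion while sidestepping all of that matrix calculus: on the characteristic function $\exp\bigl(-\tfrac12\boldsymbol{\omega}^{\top}\mathbf{\Sigma}\boldsymbol{\omega}\bigr)$ the covariance enters linearly in the exponent, so $\partial_{\Sigma_{ij}}$ produces $-\tfrac12\omega_i\omega_j$ while $\partial_{y_i}\partial_{y_j}$ produces $-\omega_i\omega_j$, and the factor $\tfrac12$ is immediate rather than emerging from a comparison of two multi-line computations. The price is that you must justify Fourier inversion in addition to the differentiation-under-the-integral step that both proofs require; the paper's computation, though longer, stays entirely in the spatial domain. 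Two of your side remarks are worth keeping: the symmetric-matrix derivative convention (diagonal versus off-diagonal entries) is glossed over in the paper as well, and your observation that the effective noise covariance is $(1-\tilde{\alpha}_t)\mathbf{\Sigma}$ — so that a literal $\nabla_{\mathbf{\Sigma}}$ picks up an extra factor of $1-\tilde{\alpha}_t$ unless the gradient is read with respect to the effective covariance — is a genuine bookkeeping issue that the paper's own derivation silently drops when it omits the $\tilde{\beta}_t$ factors from $\mathbf{G}$ and from the $\mathbf{y}$-gradient; your reading is the one that makes the stated identity exact.
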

\begin{lemma}\label{integration vanish}
    Let $f(\cdot)$ be same defiend as in Definition~\ref{f divergence def}, for any function $q(\mathbf{y};{\boldsymbol{\theta}})f(\tfrac{p(\mathbf{y})}{q(\mathbf{y};{\boldsymbol{\theta}})})$ whose gradient $\nabla_{\mathbf{y}}$ and Hessian operator $\Delta_{\mathbf{y}}$ are well defined, the following integral vanishes:
    \begin{align}\label{integration vanish equation}
        \int_{\mathbb{R}^m} \nabla_{\mathbf{y}}\,\cdot\,\bigl[\nabla_{\mathbf{y}}\bigl(q(\mathbf{y};{\boldsymbol{\theta}})\,f(\frac{p(\mathbf{y})}{q(\mathbf{y};{\boldsymbol{\theta}}
        )})\bigr)\bigr] \,d\mathbf{y}
\;=\; 0.
    \end{align} 
\end{lemma}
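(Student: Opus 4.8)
The plan is to recognize that the integrand is nothing but the Laplacian of the scalar field $g(\mathbf{y}) := q(\mathbf{y};\boldsymbol{\theta})\,f\!\bigl(p(\mathbf{y})/q(\mathbf{y};\boldsymbol{\theta})\bigr)$, since $\nabla_{\mathbf{y}}\cdot\nabla_{\mathbf{y}} = \Delta_{\mathbf{y}} = \operatorname{tr}(\mathbb{H}_{\mathbf{y}})$. The identity \eqref{integration vanish equation} then asserts that the total integral of a Laplacian over $\mathbb{R}^m$ vanishes, which is a boundary-flux phenomenon. The natural tool is the divergence theorem: on the ball $B_R = \{\mathbf{y} : \|\mathbf{y}\|\le R\}$ with outward unit normal $\mathbf{n}$,
\begin{equation}
    \int_{B_R} \Delta_{\mathbf{y}} g \, d\mathbf{y} = \int_{\partial B_R} (\nabla_{\mathbf{y}} g)\cdot \mathbf{n}\, dS,
\end{equation}
so it suffices to show the surface flux on the right tends to $0$ along a sequence $R_n \to \infty$.

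To control the flux I would first rewrite the gradient in a form that exposes the decay. Writing $r := p/q$ and using $\nabla q = q\,\nabla \log q$ together with $\nabla r = r\,(\nabla\log p - \nabla \log q)$, one obtains
\begin{equation}
    \nabla_{\mathbf{y}} g = q\, f(r)\, \nabla \log q + p\, f'(r)\,\bigl(\nabla \log p - \nabla \log q\bigr).
\end{equation}
This is the crucial reformulation: each term pairs one of the \emph{densities} $p,q$, which (being convolutions with the Gaussian noise kernel of covariance $\mathbf{\Sigma}$) decay faster than any polynomial, with a \emph{score} $\nabla\log p$ or $\nabla\log q$ that grows at most polynomially by the hypothesis of Theorem~\ref{f divergence}, modulated by a factor $f(r)$ or $f'(r)$.

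The estimate then bounds the flux by surface area times the supremum of the integrand,
\begin{equation}
    \Bigl|\int_{\partial B_R} (\nabla_{\mathbf{y}} g)\cdot \mathbf{n}\, dS\Bigr| \le \omega_{m-1}\, R^{m-1}\!\!\sup_{\|\mathbf{y}\|=R}\|\nabla_{\mathbf{y}} g(\mathbf{y})\|,
\end{equation}
where $\omega_{m-1}$ is the surface measure of the unit sphere. Since the geometric factor $R^{m-1}$ and the at-most-polynomial score growth are both dominated by the super-polynomial decay of $p$ and $q$, the quantity $R^{m-1}\sup_{\|\mathbf{y}\|=R}\|\nabla_{\mathbf{y}} g\|$ tends to $0$, which yields the claim.

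The hard part will be handling the factors $f(r)$ and $f'(r)$ uniformly in the tail, because a priori the likelihood ratio $r = p/q$ need not remain bounded as $\|\mathbf{y}\|\to\infty$. I would address this by leveraging the finiteness of $D_{f}(p\|q) = \int q\, f(r)\, d\mathbf{y}$ together with the polynomial-logarithm assumption: these jointly force $q\,f(r)$ and $p\,f'(r)$ to be integrable with controlled tails, so that at worst one extracts a sequence $R_n \to \infty$ along which $R_n^{m-1}\sup_{\|\mathbf{y}\|=R_n}\|\nabla_{\mathbf{y}} g\| \to 0$, a standard consequence of tail integrability. This sequential version already suffices to conclude that $\int_{\mathbb{R}^m}\Delta_{\mathbf{y}} g\, d\mathbf{y} = 0$, which is precisely \eqref{integration vanish equation}.
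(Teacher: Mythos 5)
Your proposal is correct and follows essentially the same route as the paper's own proof: convert the integral of the Laplacian of $g = q\,f(p/q)$ into a boundary flux via the divergence theorem over balls, then argue the flux vanishes at infinity from the finiteness of $D_f(p\|q)$ and the decay hypotheses --- indeed your explicit decomposition $\nabla_{\mathbf{y}} g = q f(r)\nabla\log q + p f'(r)(\nabla\log p - \nabla\log q)$ and the extraction of a sequence $R_n \to \infty$ from tail integrability make the limiting step more careful than the paper's rather terse treatment. One caveat: your parenthetical claim that $p$ and $q$ decay faster than any polynomial \emph{because} they are Gaussian convolutions is false in general (convolution with a Gaussian does not accelerate the tails of a heavy-tailed $P_{\mathbf{X}}$, and only finite second moments are assumed), so the fast decay should be invoked directly from the theorem's explicit ``smooth and decay quickly'' hypothesis, as the paper does, rather than derived from the channel structure.
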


\begin{proof}
    (Proof of Theorem \ref{f divergence})  
    For brevity, we omit references to the variable \(\mathbf{y}\) and the parameter \(\boldsymbol{\theta}\) in the integrals and density functions, whenever this does not cause ambiguity.

    Let \(f(\cdot)\) be a single-variable function, with \(f'\) and \(f''\) denoting its first and second derivatives, respectively. We seek \(\nabla_{\mathbf{\Sigma}}\,D_f(p \| q)\). This is written as: 
    \begin{align}
        \nabla_{\mathbf{\Sigma}} D_f(p \| q) &= \int_{\mathbb{R}^m} \left[f\left(\frac{p}{q}\right) - \left(\frac{p}{q}\right) f'\left(\frac{p}{q}\right)\right] \nabla_{\mathbf{\Sigma}} q \, d\mathbf{y} \notag \\
        &\quad + \int_{\mathbb{R}^m} f'\left(\frac{p}{q}\right) \nabla_{\mathbf{\Sigma}} p \, d\mathbf{y}.
    \end{align}

    %From channel \eqref{vector channel}, \(q(\mathbf{y}) = \int q(\mathbf{x}) p(\mathbf{y}|\mathbf{x})\,d\mathbf{x}\). By Lemma \ref{heat equation}, \(\nabla_{\mathbf{\Sigma}} q(\mathbf{y}) = \frac{1}{2} \mathbb{H}_q(\mathbf{y})\). Thus, using the vector form of the heat equation and omitting references to the variable \(\mathbf{y}\) in the following operators, we obtain:
    From channel \eqref{vector channel}, the density \(q(\mathbf{y})\) can be expressed as \(q(\mathbf{y}) = \int q(\mathbf{x}) p(\mathbf{y}|\mathbf{x})\,d\mathbf{x}\). According to Lemma \ref{heat equation}, the gradient with respect to \(\mathbf{\Sigma}\) of \(q(\mathbf{y})\) is given by \(\nabla_{\mathbf{\Sigma}} q(\mathbf{y}) = \frac{1}{2} \mathbb{H}_q(\mathbf{y})\). Employing the matrix form of the heat equation and simplifying by omitting references to the variable \(\mathbf{y}\) in the operators, we derive the following:
    \begin{align}\label{Hessian matrix of f}
        \nabla_{\mathbf{\Sigma}} D_f(p \| q) &= \frac{1}{2} \int_{\mathbb{R}^m} \left[f\left(\frac{p}{q}\right) - \left(\frac{p}{q}\right) f'\left(\frac{p}{q}\right)\right] \Delta q \, d\mathbf{y} \notag \\
        &\quad + \frac{1}{2} \int_{\mathbb{R}^m} f'\left(\frac{p}{q}\right) \Delta p \, d\mathbf{y}.
    \end{align}
    Utilizing integration by parts, \textit{i.e.,} \( g'h= (gh)' - gh' \), as adopted in \cite{chen2013mismatched,5165186,1564431} and further elaborated in the simplification techniques of \cite{guo2009relative}, the integrand can be equivalently transformed as follows:
    \begin{equation}\label{last term}
        \Delta \left(q f\left(\frac{p}{q}\right)\right) - \nabla p \cdot \Delta f\left(\frac{p}{q}\right) + \left(\frac{p}{q}\right) \nabla q \cdot \Delta f\left(\frac{p}{q}\right).
    \end{equation}
    
    %By Lemma \ref{integration vanish}, the first term in Eq. \eqref{last term} vanishes upon integration:
    Applying Lemma \ref{integration vanish}, the first term in Eq. \eqref{last term} vanishes upon integration:
    \begin{equation}
        \frac{1}{2}\int_{\mathbb{R}^m} \nabla \cdot \left[ \nabla \left( q f\left(\frac{p}{q}\right)\right) \right] \, d\mathbf{y} = 0,
    \end{equation}
    leaving the following two terms:
    \begin{align}
        &- \nabla p \cdot \Delta f\left(\frac{p}{q}\right) + \left(\frac{p}{q}\right) \nabla q \cdot \Delta f\left(\frac{p}{q}\right) \notag \\
        &= -q \nabla \left(\frac{p}{q}\right) \cdot \Delta f\left(\frac{p}{q}\right) =  -q\nabla \left(\frac{p}{q}\right) \bigl(\nabla \bigl(f'\left(\frac{p}{q}\right)\bigr)\bigl)^T \notag\\
        &= -q f''(\frac{p}{q})\left(\nabla(\frac{p}{q})\nabla^{T}(\frac{p}{q})\right) \notag\\
        &= -q \left\|\nabla \left(\frac{p}{q}\right)\right\|_{\text{OP}}^{2} f''\left(\frac{p}{q}\right). \notag
    \end{align}
    
    Collecting the results above, we obtain:
    \begin{equation}
        \nabla_{\mathbf{\Sigma}} D_f\left( p(\mathbf{y}) \| q_{\boldsymbol{\theta}}(\mathbf{y}) \right) = -\frac{1}{2} \mathcal{I}_f \left( p(\mathbf{y}) \| q_{\boldsymbol{\theta}}(\mathbf{y}) \right),
    \end{equation}
    which completes the proof of Theorem \ref{f divergence}.
\end{proof}
Theorem \ref{f divergence} establishes a formal connection between $f$-score matching and $f$-divergence, revealing intriguing aspects of their interplay. This result generalizes the relation between mismatched estimation and information measures discussed in \cite{10.5555/1795114.1795156} to vector Gaussian channels with correlated noise, in a manner analogous to the generalization from \cite{verdu2010mismatched} to \cite{chen2013mismatched}. Specifically, we show that, in vector Gaussian channels, the gradient of the $f$-divergence between the channel output distributions with respect to the noise covariance matrix is directly linked to the excess relative Fisher information matrix induced by a mismatched input distribution. A similar relation for scalar Gaussian channels was previously established in \cite{guo2009relative}. Since Fisher divergence plays a central role in score matching methods \cite{huang2021variational,song2019generative}, the connection uncovered here may provide new insights for density estimation and optimization tasks in (implicit) generative modeling. Potential applications are expected to emerge in this direction, particularly given the relevance of divergences such as the Jensen-Shannon divergence and the squared Hellinger distance in various optimization problems in machine learning \cite{farnia2018convex,10446269,shi2024on}.

\subsection{Classical Relative Entropy}

It is widely acknowledged that maximizing a model's log-likelihood corresponds to minimizing the Kullback-Leibler (KL) divergence between the data distribution and the model distribution. Nonetheless, reducing score matching loss does not necessarily affect the likelihood in a comparable manner. Score-based diffusion models exploit the alignment between KL divergence and score matching under isotropic Gaussian noise to approximate maximum likelihood \cite{song2021maximum}, demonstrating significant utility in applications such as compression \cite{theis2022lossy}, semi-supervised learning \cite{lasserre2006principled, dai2017good}, and adversarial purification \cite{song2017pixeldefend}. Whereas maximum likelihood directly targets the minimization of KL divergence, score matching, as stipulated in Theorem 1 of \cite{10.5555/1795114.1795156}, aims to neutralize its derivative in the scale space at \( t = 0 \). This discussion underscores the impetus to extend score matching learning into a more flexible parametric learning framework, enhancing model stability and robustness.

As Corollary \ref{KL} shows, the gradient with respect to the noise covariance matrix \(\mathbf{\Sigma}\) is directly tied to score matching, as well.

\begin{corollary}\label{KL}
    Let \(P\), \(Q\), and \(\mathbf{N}\) be defined the same way as in Theorem \ref{f divergence}. Then:
\begin{align}\label{KL equation}
    \nabla_{\mathbf{\Sigma}}D_{\text{KL}}\bigl(p({\mathbf{y}}) \,\|\, q({\mathbf{y};\boldsymbol{\theta}})\bigr) 
    &= -\frac{1}{2}\,\mathcal{I}\bigl(p({\mathbf{y}}) \,\|\, q({\mathbf{y};\boldsymbol{\theta}})\bigr) ,
\end{align}
in which 
\[
D_{\text{KL}}\bigl(P \;\|\; Q\bigr) 
= \int \log \frac{dP}{dQ} dP
\]
is the classical Kullback-Leibler divergence for probability distribution $P \ll Q$.
\end{corollary}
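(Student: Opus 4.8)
The plan is to realize the Kullback--Leibler divergence as the single member of the $f$-divergence family generated by $f(t) = t\log t$ and then specialize Theorem \ref{f divergence} directly, with no further channel-level computation required.

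First I would verify that $f(t) = t\log t$ satisfies the hypotheses of Definition \ref{f divergence def} and Theorem \ref{f divergence}: it is continuous on $(0,\infty)$, it is strictly convex because $f''(t) = 1/t > 0$, it vanishes at $t=1$ since $f(1) = 1\cdot\log 1 = 0$, and it is twice differentiable, so the requirement that $f$ possess a second derivative $f''$ is met. Substituting $dP/dQ = p/q$ into \eqref{f-divergence eq} then gives
\begin{equation}
    D_f(p \| q) = \int \frac{p}{q}\log\!\frac{p}{q}\; q\, d\mathbf{y} = \int \log\!\frac{p}{q}\; p\, d\mathbf{y} = D_{\mathrm{KL}}(p \| q),
\end{equation}
which matches the definition of $D_{\mathrm{KL}}$ stated in the corollary.

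Next I would evaluate the generalized relative Fisher information $\mathcal{I}_f$ at this $f$. Using the representation $\mathcal{I}_f(P\|Q) = E_P\bigl[(p/q)\,f''(p/q)\,\|\nabla\log(p/q)\|_{\mathrm{OP}}^2\bigr]$ and the fact that $f''(t) = 1/t$, the prefactor collapses, $\tfrac{p}{q} f''(\tfrac{p}{q}) = \tfrac{p}{q}\cdot\tfrac{q}{p} = 1$, so that
\begin{equation}
    \mathcal{I}_f(p \| q) = E_p\bigl[\|\nabla\log p - \nabla\log q\|_{\mathrm{OP}}^2\bigr] = \mathcal{I}(p \| q),
\end{equation}
recovering the relative Fisher information matrix of \eqref{relative fisher information Eq}. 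Invoking Theorem \ref{f divergence} with this choice of $f$ then yields \eqref{KL equation} immediately.

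The argument has essentially no obstacle beyond bookkeeping, but the one point deserving care is checking that the regularity assumptions of Theorem \ref{f divergence} remain valid for $f(t)=t\log t$: one must confirm that $D_{\mathrm{KL}}(p\|q)$ is finite and that $\log(p/q)$ has at most polynomial growth, so that the interchange of the gradient $\nabla_{\mathbf{\Sigma}}$ with the integral and the integration-by-parts step used inside the proof of Theorem \ref{f divergence} continue to hold. Given the smoothness and decay assumptions already imposed on $p(\mathbf{y})$ and $q(\mathbf{y};\boldsymbol{\theta})$, these conditions are inherited automatically, so the corollary follows as a clean specialization rather than a new computation.
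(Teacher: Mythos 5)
Your proposal is correct and follows exactly the route the paper takes: the paper likewise treats Corollary~\ref{KL} as the specialization of Theorem~\ref{f divergence} to $f(t)=t\log t$, omitting the details you spell out (the checks that $f''(t)=1/t$ collapses the prefactor $\tfrac{p}{q}f''(\tfrac{p}{q})$ to $1$ so that $\mathcal{I}_f$ reduces to $\mathcal{I}$, and that $D_f$ reduces to $D_{\mathrm{KL}}$). Your version is simply a more complete write-up of the same argument.
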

\begin{proof}
    As such, Corollary \ref{KL} can be viewed as a subset derived from Theorem \ref{f divergence}, the proof is omitted here.
\end{proof}
When \(f(t) = t \log t\), the expression in \eqref{f divergence equation} simplifies to \eqref{KL equation} (as per Theorem \ref{KL}), thereby establishing Corollary \ref{KL} as a specific instance of Theorem \ref{f divergence}. Additionally, \eqref{KL equation} extends the high-dimensional de Bruijn's identity, which was originally detailed in \cite{1564431} with a focus on the local properties of differential entropy without addressing mismatched estimation. The findings from \cite{1564431} are encapsulated by:
\[
    \nabla_{\mathbf{\Sigma}}h(\mathbf{X} + \mathbf{N}) 
    = J(\mathbf{X} + \mathbf{N}),
\]
where \(h(\cdot)\) denotes the differential entropy. 

Thanks to Corollary~\ref{KL}, the gradient of KL divergence can be evaluated using score matching tools \cite{6795935,song2020sliced}, which are generally easy to compute for many input distributions (e.g., Gaussian mixture models and diffusion generative models). Hence gradient descent can be applied to find the optimal injected noise covariance matrix \[\mathbf{\Sigma} \leftarrow \mathbf{\Sigma} - \xi D_{\text{KL}}(p(\mathbf{y})\|q(\mathbf{y};\boldsymbol{\theta})),\] where $\xi$ is the step size. As evidenced by the applications of \eqref{KL equation} in numerous real-world problems \cite{huang2024blue,hoogeboom2023blurring,voleti2022score}, the methods developed here are also likely to be of practical relevance.

\section{Concluding Remarks}\label{implications}

Recent advancements in machine learning models have showcased the effectiveness of gradient-based optimization methods, particularly when paired with the Hessian matrix to employ Newton’s method \cite{1564431}, accelerating convergence and facilitating the rapid generation of correlated noise masks. This approach is especially effective in scenarios where the divergence function is concave with respect to system parameters \cite{5165186}, making it crucial for optimizing systems that generate samples by estimating covariance matrices \cite{yu2024constructing} or modeling Gaussian mixture models \cite{jia2023entropic}.

Additionally, our theorems support the development of blurring diffusion models \cite{hoogeboom2023blurring} and blue noise diffusion techniques \cite{huang2024blue}, aiming to understand and utilize correlated noise to reduce overfitting and enhance diversity in the generated outcomes, mirroring the advancements in correlated mask methods \cite{ulichney1993void}.
%Diffusion models, using Gaussian inputs and employing pointwise mutual information \(I_{P}(\mathbf{X};\mathbf{Z}_{\gamma})\), have been reframed as denoising regression problems through the use of neural networks \cite{kong2023information}. This approach underscores that even more general situations can be addressed by exploring the relationship between MMI $I_{M}(\mathbf{X};\mathbf{X}_{t}^{\boldsymbol{\theta}})$, mismatched MSE, and the score function, particularly when the estimator does not perfectly align with the ground truth across all instances during the inference. Additionally, these theorems support the development of blurring diffusion models \cite{hoogeboom2022blurring} and blue noise diffusion techniques \cite{huang2024blue}, aiming to understand and utilize correlated noise to reduce overfitting and enhance diversity in the generated outcomes, mirroring the advancements in correlated mask methods \cite{ulichney1993void}. 

%Beyond maximum likelihood training \cite{theis2015note}, our study explores density estimation rates by analyzing the metric entropy of hypothesis classes, deepening our understanding of deep generative model dynamics. 

Moving beyond maximum likelihood training, we characterize density estimation rates of deep generative models via the metric entropy of hypothesis classes \cite{jia2023entropic}, providing a theoretical perspective on their generalization and training behavior. Implementing \(f\)-divergence as a training objective \cite{nowozin2016f}, substantially enhances generative model training, yielding images of higher fidelity. Score matching plays a crucial role here, allowing for a precise alignment of model predictions with the actual data distribution, thereby optimizing the training process. This enhanced method underscores the transformative impact of combining \(f\)-divergence with score matching, significantly advancing the capabilities of density estimation techniques in machine learning.

%\section{Conclusion}\label{conclusion}

In this paper, we introduce \(f\) score matching, a versatile method for learning statistical diffusion models and estimating scores within implicit distributions. Our technique adeptly handles data with correlated noise and seamlessly integrates into contemporary generative model frameworks. Theoretically, our generalized relative Fisher information effectively bridges the score function and $f$-divergence under specific conditions, enhancing its applicability. This work extends the connection between the score function and maximum likelihood in the vector Gaussian channel, establishing new intersections between crucial information-theoretic and estimation-theoretic measures.

\section{Appendix}\label{appendix}

\begin{proof}[Proof of Lemma \ref{heat equation}]

For brevity, we omit references in the integrals whenever this does not cause ambiguity.
Recall that the relationship between the densities of channel input \( \mathbf{X} \) and output \( \mathbf{Y} \) as follows:
    \begin{align}
        &p(\mathbf{y}) = \int p(\mathbf{x}) (2\pi \tilde{\beta}_{t})^{-\frac{m}{2}} |\mathbf{\Sigma}|^{-\frac{1}{2}} \notag \\
        &\exp\left[-\frac{1}{2\tilde{\beta}_{t}}(\mathbf{y}-\sqrt{\tilde{\alpha}_t}\mathbf{Hx})^{T}\mathbf{\Sigma}^{-1}(\mathbf{y}-\sqrt{\tilde{\alpha}_t}\mathbf{Hx})\right] \, d\mathbf{x} \notag
    \end{align}
    where \( \tilde{\beta}_t = 1 - \tilde{\alpha}_t \) acts as the scale factor.
Let \(\mathbf{Q}(\mathbf{x},\mathbf{y})\) be a quadratic form associated with matrix \(\mathbf{\Sigma}\), and let \(\mathbf{G}(\mathbf{x},\mathbf{y})\) be the Gram matrix of 
\[
\mathbf{M} \;=\; \mathbf{\Sigma}^{-1} \bigl(\mathbf{y} - \sqrt{\tilde{\alpha}_t}\,\mathbf{H}\mathbf{x}\bigr).
\]
Concretely, we define
\begin{align}
   \mathbf{Q}(\mathbf{x},\mathbf{y}) 
   &= (\mathbf{y} - \sqrt{\tilde{\alpha}_t}\,\mathbf{H}\mathbf{x})^{T}\,\mathbf{\Sigma}^{-1}\,(\mathbf{y} - \sqrt{\tilde{\alpha}_t}\,\mathbf{H}\mathbf{x}), \notag
   \\
   \mathbf{G}(\mathbf{x},\mathbf{y})
   &= \mathbf{\Sigma}^{-1}\,(\mathbf{y} - \sqrt{\tilde{\alpha}_t}\,\mathbf{H}\mathbf{x})
      \,(\mathbf{y} - \sqrt{\tilde{\alpha}_t}\,\mathbf{H}\mathbf{x})^{T}\,\mathbf{\Sigma}^{-1}. \notag
\end{align}
Then, consider
\begin{equation}
    p(\mathbf{y}) = \int p(\mathbf{x})(2\pi \tilde{\beta}_{t})^{-\frac{m}{2}}\,
    |\mathbf{\Sigma}|^{-\frac12}\,
    \exp\! \ \Bigl(-\frac{1}{2\tilde{\beta}_{t}}\,\mathbf{Q}(\mathbf{x},\mathbf{y})\Bigr)
\,d\mathbf{x}. \notag
\end{equation}

\paragraph {Derivative w.r.t.\ \(\mathbf{\Sigma}\)}
We compute
\begin{align}
&\nabla_{\mathbf{\Sigma}}p(\mathbf{y}) \notag \\ 
&=
\int p(\mathbf{x})\,
\nabla_{\mathbf{\Sigma}}
\Bigl[
  (2\pi \tilde{\beta}_{t})^{-\frac{m}{2}}\,
  |\mathbf{\Sigma}|^{-\frac12}\,
  \exp\! \ \bigl(-\frac{1}{2\tilde{\beta}_{t}}\,\mathbf{Q}(\mathbf{x},\mathbf{y})\bigr)
\Bigr]
\,d\mathbf{x}. \notag
\end{align}

Standard matrix calculus yields
\begin{align}
    &\nabla_{\mathbf{\Sigma}}
\Bigl[
  |\mathbf{\Sigma}|^{-\frac12}\,\exp\! \ \bigl(-\frac{1}{2\tilde{\beta}_{t}}\,\mathbf{Q}\bigr)
\Bigr] \notag \\
&= 
|\mathbf{\Sigma}|^{-\frac12}\,\exp\! \ \bigl(-\frac{1}{2\tilde{\beta}_{t}}\,\mathbf{Q}\bigr)
\bigl(-\frac12\,\mathbf{\Sigma}^{-T}
  + \frac12\,\mathbf{G}(\mathbf{x},\mathbf{y})
\bigr), \notag
\end{align}
where \(\mathbf{\Sigma}^{-T} = (\mathbf{\Sigma}^{-1})^T\), which in the symmetric case simply becomes \(\mathbf{\Sigma}^{-1}\). Hence,
\begin{align}
\nabla_{\mathbf{\Sigma}} p(\mathbf{y})
&=\;
\int p(\mathbf{x}) \,
 (2\pi\tilde{\beta}_{t})^{-\frac{m}{2}}
 |\mathbf{\Sigma}|^{-\frac12}\, \notag \\
 &\exp\! \ \bigl(-\frac{1}{2\tilde{\beta}_{t}}\,\mathbf{Q}(\mathbf{x},\mathbf{y})\bigr)
 \bigl(
   -\frac12\,\mathbf{\Sigma}^{-T}
   + \frac12\,\mathbf{G}(\mathbf{x},\mathbf{y})
 \bigr)
\,d\mathbf{x} \notag
\\
&=\;
\int p(\mathbf{x}) \,
 (2\pi \tilde{\beta}_{t})^{-\frac{m}{2}}
 |\mathbf{\Sigma}|^{-\frac12}
 \Bigl[
   -\frac12\,\mathbf{\Sigma}^{-T}
   + \frac12\,\mathbf{G}(\mathbf{x},\mathbf{y})
 \Bigr] \notag \\
 &\exp\! \ \bigl(-\frac{1}{2\tilde{\beta}_{t}}\,\mathbf{Q}(\mathbf{x},\mathbf{y})\bigr)
\,d\mathbf{x}. \notag
\end{align}

\paragraph{First derivative w.r.t.\ \(\mathbf{y}\)}
Next,
\begin{align}
    &\nabla_{\mathbf{y}}p(\mathbf{y}) \notag \\ 
    &=
\int p(\mathbf{x})\,
\nabla_{\mathbf{y}}
\Bigl[
   (2\pi \tilde{\beta}_{t})^{-\frac{m}{2}}\,
   |\mathbf{\Sigma}|^{-\frac12}\,
   \exp\! \ \bigl(-\frac{1}{2\tilde{\beta}_{t}}\,\mathbf{Q}(\mathbf{x},\mathbf{y})\bigr)
\Bigr]
\,d\mathbf{x}. \notag
\end{align}

Since \((2\pi\tilde{\beta}_{t})^{-\tfrac m2}\,|\mathbf{\Sigma}|^{-\tfrac12}\) does not depend on \(\mathbf{y}\), only the exponential part matters. We know

\(\nabla_{\mathbf{y}}\exp\! \ \bigl(-\frac{1}{2\tilde{\beta}_{t}}\,\mathbf{Q}\bigr)
=
-\exp\! \ \bigl(-\frac{1}{2\tilde{\beta}_{t}}\,\mathbf{Q}\bigr)\,\mathbf{\Sigma}^{-1}(\mathbf{y} - \mathbf{H}\mathbf{x})\),
so we get
\begin{align}
    &\nabla_{\mathbf{y}} p(\mathbf{y}) \notag \\
&=\int p(\mathbf{x})\,
 (2\pi\tilde{\beta}_{t})^{-\frac{m}{2}}\,|\mathbf{\Sigma}|^{-\frac12}\,
 \bigl(-\,\mathbf{\Sigma}^{-1}(\mathbf{y}-\mathbf{H}\mathbf{x})\bigr)\, \notag \\
 &\exp\! \ \bigl(-\frac{1}{2\tilde{\beta}_{t}}\,\mathbf{Q}(\mathbf{x},\mathbf{y})\bigr)
\,d\mathbf{x}. \notag
\end{align}

\paragraph{Hessian (second derivative) w.r.t.\ \(\mathbf{y}\)}
The Hessian \(\mathbb{H}_\mathbf{y} p(\mathbf{y})\) is
\begin{align}
    &\mathbb{H}_\mathbf{y} p(\mathbf{y})
\;=\;
\nabla_{\mathbf{y}}\!\bigl[
   \nabla_{\mathbf{y}}^{T}\,p(\mathbf{y})
\bigr] \notag \\
&= \int p(\mathbf{x})\,
 (2\pi\tilde{\beta}_{t})^{-\frac{m}{2}}\,|\mathbf{\Sigma}|^{-\frac12}\, \notag \\
 &\nabla_{\mathbf{y}}^{T}\!
 \Bigl[
   -\,\mathbf{\Sigma}^{-1}(\mathbf{y}-\mathbf{H}\mathbf{x})
   \,\exp\! \ \bigl(-\frac{1}{2\tilde{\beta}_{t}}\,\mathbf{Q}(\mathbf{x},\mathbf{y})\bigr)
 \Bigr]
\,d\mathbf{x}. \notag
\end{align}
Applying product rule and simplifying (similarly to the \(\mathbf{\Sigma}\) derivative case) gives us
\begin{align}
    &\mathbb{H}_\mathbf{y} p(\mathbf{y}) =\int p(\mathbf{x})\,
 (2\pi\tilde{\beta}_{t})^{-\frac{m}{2}}\,|\mathbf{\Sigma}|^{-\frac12}\, \notag \\
 &\bigl[\,-\,\mathbf{\Sigma}^{-T} \;+\; \mathbf{G}(\mathbf{x},\mathbf{y})\bigr]\, 
 \exp\! \ \bigl(-\frac{1}{2\tilde{\beta}_{t}}\,\mathbf{Q}(\mathbf{x},\mathbf{y})\bigr)
\,d\mathbf{x}. \notag
\end{align}

\paragraph{Relation between \(\nabla_{\mathbf{\Sigma}} p(\mathbf{y})\) and \(\mathbb{H}_\mathbf{y} p(\mathbf{y})\)}
Comparing this with the expression of \(\nabla_{\mathbf{\Sigma}} p(\mathbf{y})\), we observe that
\begin{align}
    \nabla_{\mathbf{\Sigma}} p(\mathbf{y})
&= \int p(\mathbf{x})\,
 (2\pi\tilde{\beta}_{t})^{-\frac{m}{2}}\,|\mathbf{\Sigma}|^{-\frac12} \notag \\ 
 &\Bigl[\,
   -\tfrac12\,\mathbf{\Sigma}^{-T}
   \;+\;\frac12\,\mathbf{G}(\mathbf{x},\mathbf{y})
 \Bigr]\, 
 \exp\! \ \bigl(-\frac{1}{2\tilde{\beta}_{t}}\,\mathbf{Q}(\mathbf{x},\mathbf{y})\bigr)
\,d\mathbf{x}, \notag
\end{align}
while
\begin{align}
    &\mathbb{H}_\mathbf{y} p(\mathbf{y})
\;=\;
\int p(\mathbf{x})\,
 (2\pi\tilde{\beta}_{t})^{-\frac{m}{2}}\,|\mathbf{\Sigma}|^{-\frac12} \notag \\
 &\Bigl[\,
   -\mathbf{\Sigma}^{-T}
   \;+\;
   \mathbf{G}(\mathbf{x},\mathbf{y})
 \Bigr]\,
 \exp\! \ \bigl(-\frac{1}{2\tilde{\beta}_{t}}\,\mathbf{Q}(\mathbf{x},\mathbf{y})\bigr)
\,d\mathbf{x}. \notag
\end{align}
Hence it follows that
\begin{equation}
    \nabla_{\mathbf{\Sigma}} p(\mathbf{y})
\;=\;
\tfrac12\,\mathbb{H}_\mathbf{y}\,p(\mathbf{y})
\;\equiv\;
\tfrac12\,\Delta_{\mathbf{y}}\,p(\mathbf{y}).
\end{equation}
%(Here the symbol \(\Delta_{\mathbf{y}}\) is used to denote the Hessian w.r.t.\ \(\mathbf{y}\), though in some contexts ``\(\Delta\)'' may refer to the \emph{scalar} Laplacian, i.e.\ the trace of the Hessian. In this proof, we interpret \(\Delta_{\mathbf{y}} p(\mathbf{y})\) as the matrix of second partial derivatives, i.e.\ the Hessian.)
\end{proof}

\begin{proof}[Proof of Lemma \ref{integration vanish}]
    Let \(V_r\subset \mathbb{R}^m\) be the region (an \(m\)-dimensional ball) bounded by the closed, piecewise-smooth, oriented surface \(S_r\), which is the \emph{\(m\)-sphere} of radius \(r\) centered at the origin. At any point \(\mathbf{y}\in S_r\), the symbol \(\mathbf{e}_{S_r}(\mathbf{y})\) denotes the outward-pointing unit normal vector to \(S_r\). Under the notation $d\mathbf{s}_{r} = \|d\mathbf{s}_{r}\|\mathbf{e}_{S_r}(\mathbf{y})$.

We consider the integral:
\begin{equation}\label{50}
    \int_{\mathbb{R}^m} \Delta_{\mathbf{y}} \left( q(\mathbf{y};{\boldsymbol{\theta}}) f\left(\frac{p(\mathbf{y})}{q(\mathbf{y};{\boldsymbol{\theta}})}\right) \right) d\mathbf{y}. \notag
\end{equation}
%The $\int_{\mathbb{R}^n} \nabla_{\mathbf{y}}\,\cdot\,\bigl[\nabla_{\mathbf{y}}\bigl(q_{\boldsymbol{\theta}}(\mathbf{y})\,f(\frac{p(\mathbf{y})}{q_{\boldsymbol{\theta}}(\mathbf{y})})\bigr)\bigr] \,d\mathbf{y}
%\;=\; 0$ should be zero by Gauss's theorem:
Applying Gauss's theorem, we rewrite this integral as a surface integral:
\begin{align}\label{surface integration}
    &\int_{\mathbb{R}^m} \, \nabla_{\mathbf{y}}^{2}(qf(\frac{p}{q})) d\mathbf{y} = \lim_{r\rightarrow\infty} \int_{V_r} \nabla_{\mathbf{y}} \cdot \nabla_{\mathbf{y}}(qf(\frac{p}{q})) d\mathbf{y} \notag \\ 
    &= \lim_{r\rightarrow\infty}\int_{S_{r}} \left(\nabla_{\mathbf{y}} qf(\frac{p}{q})\cdot \mathbf{e}_{S_{r}} (\mathbf{y})  \right)dS_{r}. \notag
\end{align}
Collect terms, noting \(f\)-divergence is finite, to obtain:
\begin{align}
     \lim_{r \rightarrow \infty} \int_{S_r}\left(\nabla_{\mathbf{y}} qf(\frac{p}{q})  \right) \cdot d\mathbf{s}_{r} = 0. \notag
\end{align}
We integrate over $r \geq 0$ the surface integral above and apply Green’s identity to find the relations
\begin{align}
    &\int_{0}^\infty \int_{S_{r}} \left((f(\frac{p}{q}))\nabla_{\mathbf{y}}q  + q \nabla_{\mathbf{y}}f(\frac{p}{q})  \right)\cdot d\mathbf{s}_{r}  d r \notag \\ &= \lim_{r\rightarrow\infty} \int_{S_r} qf(\frac{p}{q})\mathbf{e}_{S_{r}}(\mathbf{y}) \cdot d\mathbf{s}_{r}
    - \int_{\mathbb{R}^{m}} \nabla_{\mathbf{y}} \cdot (qf(\frac{p}{q})) d\mathbf{y}. \notag
\end{align}
The surface integral above vanishes in the limit \(r \to \infty\) since the finite $f$-divergence, the corresponding volume integral over all of \(\mathbb{R}^m\) must be finite due to \(p(\mathbf{y})\) and \(q(\mathbf{y};{\boldsymbol{\theta}})\) are smooth and decay quickly:
\[
\left|\, 
\int_{\mathbb{R}^{n}} \nabla_{\mathbf{y}} \cdot \left( q\, f\left( \frac{p}{q} \right) \right) \, d\mathbf{y} 
\,\right| \leq \infty .
\]
Hence, on \(\mathbb{R}^m\), together with the fact that the limit exists, this proves that the entire term in \eqref{integration vanish equation} evaluates to zero:
\[
\frac{1}{2}\int_{\mathbb{R}^m} 
\nabla_{\mathbf{y}}\,\cdot\,
\bigl[\nabla_{\mathbf{y}}
\bigl(q\,f(\frac{p}{q})\bigr)\bigr] \,d\mathbf{y}
\;=\; 0,
\]
and therefore the first term of \eqref{last term} integrates to zero.
\end{proof}

\newpage
\bibliographystyle{IEEEtran}

\bibliography{reference}

\end{document}